\newtheorem{mythm}{Theorem}
\newcommand{\casql}{Key Laboratory of Quantum Information, Chinese Academy of Sciences, School of Physics, University of Science and Technology of China, Hefei, Anhui, 230026, China}
\newcommand{\casex}{CAS Center For Excellence in Quantum Information and Quantum Physics, University of Science and Technology of China, Hefei, Anhui, 230026, China}
\newcommand{\hfnational}{Hefei National Laboratory, University of Science and Technology of China, Hefei, Anhui, 230088, China}
\newcommand{\aihf}{Institute of Artificial Intelligence, Hefei Comprehensive National Science Center, Hefei, Anhui, 230088, China}
\newcommand{\origin}{Origin Quantum Computing, Hefei, Anhui, 230026, China}
\begin{document}
\title{Can Variational Quantum Algorithms Demonstrate Quantum Advantages? Time Really Matters}

\author{Huan-Yu Liu}
\email{liuhuanyu@ustc.edu.cn}
\orcid{0000-0002-6158-9627}
\affiliation{\casql}
\affiliation{\casex}
\affiliation{\hfnational}

\author{Zhao-Yun Chen}
\orcid{0000-0002-5181-160X}
\email{chenzhaoyun@iai.ustc.edu.cn}
\affiliation{\aihf}

\author{Tai-Ping Sun}
\orcid{0009-0009-2591-1672}
\affiliation{\casql}
\affiliation{\casex}
\affiliation{\hfnational}

\author{Cheng Xue}
\orcid{0000-0003-2207-9998}
\affiliation{\aihf}

\author{Yu-Chun Wu}
\email{wuyuchun@ustc.edu.cn}
\orcid{0000-0002-8997-3030}
\affiliation{\casql}
\affiliation{\casex}
\affiliation{\hfnational}
\affiliation{\aihf}

\author{Guo-Ping Guo}
\orcid{0000-0002-2179-9507}
\affiliation{\casql}
\affiliation{\casex}
\affiliation{\hfnational}
\affiliation{\aihf}
\affiliation{\origin}

\begin{abstract}
    Applying low-depth quantum neural networks (QNNs), variational quantum algorithms (VQAs) are both promising and challenging in the noisy intermediate-scale quantum (NISQ) era: Despite its remarkable progress, criticisms on the efficiency and feasibility issues never stopped. 
    However, whether VQAs can demonstrate quantum advantages is still undetermined till now, which will be investigated in this paper.
    First, we will prove that there exists a dependency between the parameter number and the gradient-evaluation cost when training QNNs. Noticing there is no such direct dependency when training classical neural networks with the backpropagation algorithm, we argue that such a dependency limits the scalability of VQAs. 
    Second, we estimate the time for running VQAs in ideal cases, i.e., without considering realistic limitations like noise and reachability. We will show that the ideal time cost easily reaches the order of a 1-year wall time. 
    Third, by comparing with the time cost using classical simulation of quantum circuits, we will show that VQAs can only outperform the classical simulation case when the time cost reaches the scaling of $10^0$-$10^2$ years.
    Finally, based on the above results, we argue that it would be difficult for VQAs to outperform classical cases in view of time scaling, and therefore, demonstrate quantum advantages, with the current workflow.
    Since VQAs as well as quantum computing are developing rapidly, this work does not aim to deny the potential of VQAs. The analysis in this paper provides directions for optimizing VQAs, and in the long run, seeking more natural hybrid quantum-classical algorithms would be meaningful. 
\end{abstract}
\maketitle

\section{Introduction}\label{intro}

Machine learning (ML) \cite{ml,ml2,ml3} is one of the most remarkable technology in the 21st century, which has applications ranging from daily works to scientific research \cite{mlapp}. Developments of ML rely on the success of computer science and the neural network (NN) model \cite{nn}, which provided the capability of carrying out huge computational tasks and simulating complex functions. Quantum computing \cite{quantumcompute} is also developed rapidly in decades, whose features, like quantum entanglement and quantum operation parallelism, are unavailable for their classical counterparts. Quantum computing has been introduced to the ML region, known as quantum machine learning (QML) \cite{vqml1,vqml2}.

Variational quantum algorithms (VQAs) \cite{vqa,vqe} are representative of QML, whose workflow is shown in Fig. \ref{vqaflow}. It is a hybrid quantum-classical algorithm. A quantum processor prepares an ansatz with the quantum neural network (QNN) \cite{pqcml} $U(\bm{\theta})$ 
\footnote{It is also called parameterized quantum circuits in some works. To make it consistent with classical machine learning, we use QNN here.}
as $| \psi (\bm{\theta}) \rangle= U( {\bm{\theta}} )|0\rangle$ with $\bm{\theta}=\{  \theta_1,\theta_2,\cdots,\theta_L \}$ the (trainable) parameter vector. The ansatz is then used to evaluate cost functions with quantum measurements, which is usually an expectation value under some Hamiltonian $H$: $C(\bm{\theta}) =\langle \psi(\bm{\theta})|H|\psi(\bm{\theta})\rangle$. The classical processor optimizes $\bm{\theta}$ to minimize the cost function. QNNs in VQAs are usually low-depth, which can be performed on current noisy intermediate-scale quantum (NISQ) \cite{nisq} devices even without the support of fault-tolerant quantum computation technology \cite{ftqc}.
This makes VQAs potential to achieve quantum advantages in the NISQ era. Since its proposal, VQAs have been developed rapidly and have 
applications ranging from quantum chemistry simulation \cite{vqachemistry1,vqachemistry2,hea} to numerical computation \cite{vqsvd,vqpe}. Experimental demonstrations have also been performed \cite{hea,vqhf,vqfexp}.

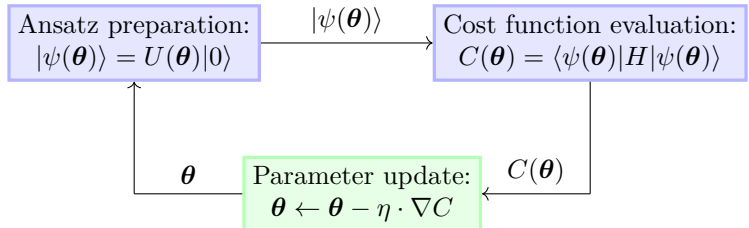
\begin{figure}
    \centering
    \begin{tikzpicture}
    [lnode/.style={rectangle,draw=blue!30,fill=blue!10,align=center,very thick,minimum size=8mm},
    rnode/.style={rectangle,draw=green!30,fill=green!10,align=center,very thick,minimum size=8mm}]
    \node[lnode](qcini) at (-3,0){Ansatz preparation:\\ $|\psi(\bm{\theta})\rangle=U(\bm{\theta})|0\rangle$};
    \node[lnode](qm) at (3,0){Cost function evaluation:\\ $C(\bm{\theta})=\langle \psi(\bm{\theta}) |H|\psi(\bm{\theta})\rangle  $};
    \draw [->] (qcini) -- node[above]{$|\psi(\bm{\theta})\rangle$}  (qm);
    \node[rnode](optimize) at (0,-2){Parameter update:\\ $\bm{\theta}  \leftarrow \bm{\theta} - \eta \cdot \nabla C   $};
    \draw  (optimize) -- node[above]{$\bm{\theta}$}  (-3,-2);
    \draw [->] (-3,-2) -- (qcini);
    \draw (qm) -- (3,-2);
    \draw [->] (3,-2) -- node[above]{$C(\bm{\theta})$} (optimize);
\end{tikzpicture}
    \caption{Workflow of variational quantum algorithms. Operation labeled blue or green indicates that it is performed on quantum or classical processors, respectively.}
    \label{vqaflow}\end{figure}

As research progresses, the challenges of VQAs gradually attracted attention, which can be divided into the efficiency part and feasibility part: Efficiency challenges usually mean that executing VQAs requires huge resources. The well-known barren plateaus \cite{bp} describes a phenomenon with exponentially vanishing gradients, indicating the required sampling times to obtain the cost function also grows exponentially with the number of qubits. On the other hand, feasibility challenges are the major part. They focus on whether the correct answer can be acquired by running VQAs. Training VQAs is an NP-hard problem \cite{vqanphard}, besides the barren plateaus problem mentioned above, there usually exists a variety of local minimum points in the optimization landscape of VQAs \cite{vqalocalmin}, implying that it is difficult to achieve the global optimal point. The expressibility of QNNs \cite{exppqc,exppqc2} also affected the reachability issue \cite{vqereach}, where global optimal points will never be reachable if they cannot be represented by the QNN. Noise \cite{vqanoise,qaoanoise} and other factors will also affect the correctness of executing VQAs. Great efforts have also been provided to deal with such challenges, including mitigating barren plateaus to improve trainability \cite{bpsolve,bpsolve2,bpsolve3}, reducing sampling times to improve efficiency \cite{vqemeasurereducto}, mitigating noises \cite{em,em2}, etc. 

We focus on challenges in the efficiency part in this work. First, we will prove that there exists a dependency between the number of parameters in QNNs and the gradient-evaluation cost when training the QNN. Noticing that such a dependency does not exist when training classical NN models with the backpropagation algorithm \cite{bpnn}, we argue that the parameter number affected the scalability of VQAs. Next, we consider the time cost for running VQAs in an ideal setting, i.e., we do not consider realistic limitations on VQAs like noise, qubit connectivity, reachability, etc. The time cost analysis is used as follows:
\begin{itemize}
    \item The time cost scaling easily reached the 1-year wall time at about 20 qubits.
    \item By comparing with the time cost using classical simulation, we can see that VQAs can only outperform classical simulations when the time cost reaches a scaling of $10^0-10^2$ years. Therefore, quantum advantages are difficult for VQAs to achieve based on the current workflow.
\end{itemize}

In performing such analysis, we would not deny the potential of VQAs, as well as other hybrid quantum-classical algorithms in the NISQ era, but some changes and improvements need to be made. According to our analysis, some directions for optimizing VQAs are provided. Taking one step further, we need to consider what is the natural way of executing machine learning with quantum computing.

The rest of this paper is organized as follows:
In Sec. \ref{Preliminary}, we introduced some backgrounds needed for the latter analysis, including training NNs with the backpropagation algorithm and QNNs.
In Sec. \ref{gradientindepend}, the dependency of the parameter number and the gradient-evaluation cost in training QNNs is provided. 
In Sec. \ref{costvaq}, we analyze the time cost of running VQAs.
Sec. \ref{costresult} gives the total time cost of running VQAs.
In Sec. \ref{qvsc}, we compare the time cost using both VQAs and classical simulation.
A conclusion is given in Sec. \ref{conclusion}.

\section{Preliminary}\label{Preliminary}

\subsection{Training classical neural networks using the backpropagation algorithm}\label{backpropagation}

The NN model is widely applied in solving ML tasks. General NNs are comprised of neurons, whose diagram is shown in Fig. \ref{neuron}. A neuron can be viewed as a non-linear function that maps $n$ inputs $\bm{x}=\{x_1,x_2,\cdots,x_n\}$ to an output $y$ as:
\begin{equation}\label{neuronfunc}
    y = f\left( \sum_i w_ix_i-b  \right),
\end{equation}
where $b$ is a bias,  $\bm{w}=\{ w_1,w_2,\cdots,w_n\}$ is the adjustable weight vector, $f$ is the non-linear activation function and one example is the sigmod function: 
\begin{equation}
    f(x)=\frac{1}{1+e^{-x}}.
\end{equation}
Different functions can be approximated by adjusting the weight vector, and the core idea of ML is to make such functions approach desired maps. ``Learning'' is exactly the process of adjusting the weights.

Only one neuron has limited learning capability. To further increase the expressive power, i.e., be able to fit more functions, neurons can be used to construct a NN, which is shown in Fig. \ref{network}. In the NN, the input is fed into several neurons, whose outputs are then viewed as inputs to neurons in the next layer. Denote $\bm{y}=\{y_1,y_2\cdots,y_m\}$ as the output of the whole NN, or equivalently, the output of neurons corresponding to the final layer. Denote the desired value as $\bm{d}=\{d_1,d_2\cdots,d_m\}$ and the vector of weights for all neurons as $\bm{W}$. As introduced, the learning process is to adjust $\bm{W}$ such that $\bm{y}$ is close to $\bm{d}$.

\begin{figure}[ht]
    \centering
    \subfigure[Neuron]{
        \begin{tikzpicture}
    [lnode/.style={circle,draw=blue!30,fill=blue!10,very thick,minimum size=8mm}]
    \node(x_1) at (-2, -1){$x_1$};
    \node(x_2) at (-2, -2){$x_2$};
    \node(x_3) at (-2, -3){$\vdots$};
    \node(x_4) at (-2, -4){$x_n$};

    \node[lnode](f) at (0,-2.5){$y$};

    \draw [->] (x_1) -- node[above]{$w_1$} (f);
    \draw [->] (x_2) -- node[above]{$w_2$} (f);
    \draw [->] (x_4) -- node[above]{$w_n$} (f);

    \node(y) at (3, -2.5){$y=f\left( \sum_i w_ix_i-b  \right)$};
    \draw [->] (f) --  (y);
\end{tikzpicture}\label{neuron}
    }
    \subfigure[Neural network]{
        \begin{tikzpicture}
    [lnode/.style={circle,draw=blue!30,fill=blue!10,very thick,minimum size=8mm},
    mnode/.style={circle,draw=pink!100,fill=pink!30,very thick,minimum size=8mm},
    rnode/.style={circle,draw=green!30,fill=green!10,very thick,minimum size=8mm}]
    \node(x_1) at (-2, -1){$x_1$};
    \node(x_2) at (-2, -2){$x_2$};
    \node(x_3) at (-2, -3){$\vdots$};
    \node(x_4) at (-2, -4){$x_n$};

    \node[lnode](l_1) at (0,-1){$y_{l1}$};
    \node[lnode](l_2) at (0,-2){$y_{l2}$};
    \node(l_3) at (0,-3){$\vdots$};
    \node[lnode](l_4) at (0,-4){$y_{ln_l}$};

    \node[mnode](m_1) at (2,-0.5){$y_{m1}$};
    \node[mnode](m_2) at (2,-1.5){$y_{m2}$};
    \node[mnode](m_3) at (2,-2.5){$y_{m3}$};
    \node(m_4) at (2,-3.5){$\vdots$};
    \node[mnode](m_5) at (2,-4.5){$y_{mn_m}$};

    \node[rnode](r_1) at (4,-1.5){$y_{1}$};
    \node(r_2) at (4,-2.5){$\vdots$};
    \node[rnode](r_3) at (4,-3.5){$y_{m}$};

    \draw [red,->] (x_1) -- node[above]{$w_0$} (l_1);
    \draw [->] (x_1) --  (l_2);
    \draw [->] (x_1) --  (l_4);

    \foreach \x in {1,2,4}
    \draw [->] (x_2) --  (l_\x);

    \foreach \x in {1,2,4}
    \draw [->] (x_4) --  (l_\x);

    \draw [red,->] (l_1) -- node[above]{$w_1$} (m_1);
    \foreach \x in {2,3,5}
    \draw [->] (l_1) --  (m_\x);

    \foreach \x in {1,2,3,5}
    \draw [->] (l_2) --  (m_\x);

    \foreach \x in {1,2,3,5}
    \draw [->] (l_4) --  (m_\x);

    \draw [red,->] (m_1) -- node[above]{$w_2$} (r_1);
    \foreach \x in {2,3,5}
    \draw [->] (m_\x) --  (r_1);

    \foreach \x in {1,2,3,5}
    \draw [->] (m_\x) --  (r_3);

    \node(y_1) at (6, -1.5){$y_1$};
    \node(y_2) at (6, -2.5){$\vdots$};
    \node(y_3) at (6, -3.5){$y_m$};

    \foreach \x in {1,3}
    \draw [->] (r_\x) -- (y_\x);
\end{tikzpicture}\label{network}
    }
    \caption{Diagrams for neuron and neural network.}
    \label{nndig}
\end{figure}
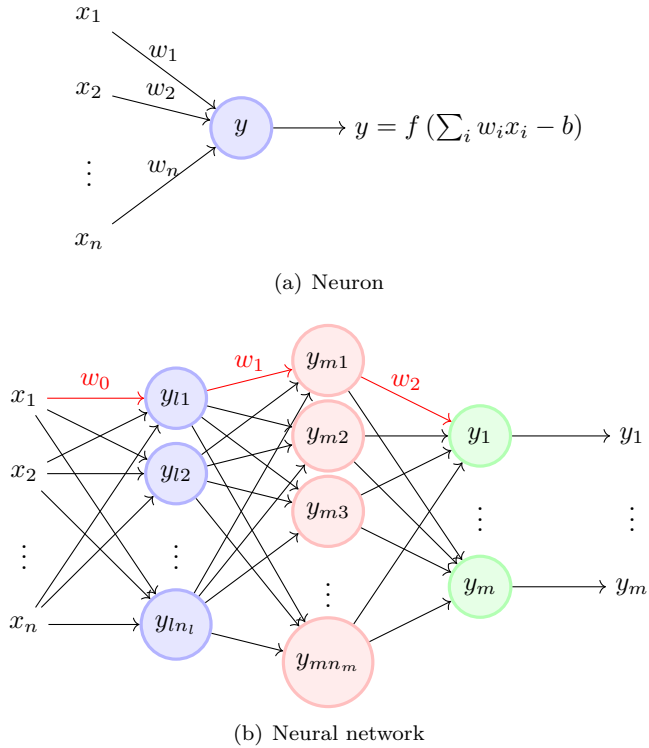

To achieve this, one can define a cost function as:
\begin{equation}\label{costclassical}
    C \equiv C(\bm{W}) := \frac 12 \sum_{i=1}^m (y_i-d_i)^2.
\end{equation}
$C=0$ implies we have finished the learning process. To find the minimum value of the cost function, one can start from some specific set of parameters and then optimize the weight vector according to optimization algorithms like gradient descent:
\begin{equation}\label{gradientdescent}
    \bm{W} \leftarrow \bm{W} - \eta \cdot \nabla C,
\end{equation}
where $\eta > 0$ is the learning rate, the gradient is $\nabla C=\{ \partial C/\partial w_j |w_j\in\bm{W} \}$. Every element in the gradient can be obtained via methods like the finite difference method:
\begin{equation}\label{finitedif}
    \frac{\partial C}{\partial w_j}=\lim_{\delta\to 0} \frac{C(w_{j\delta+})-C(w_{j\delta-})}{2\delta},
\end{equation}
where $w_{j\delta\pm}=\{ w_1,\cdots,w_j\pm\delta,\cdots\}$.

Denote the total number of weights as $M$
\footnote{The parameters number in NN and QNN may not be the same, therefore we apply different notations ($M$ and $L$).}.
If we apply Eq. \eqref{finitedif} to evaluate the gradient for every weight, we will need to execute the NN $O(M)$ times, and execute the NN once will query all $M$ weights, then the query complexity for directly evaluating the gradient scales $O(M^2)$. However, large NN execution will cost huge resources, so reducing the costs for evaluating gradients would be remarkable. We introduce the backpropagation algorithm below, which achieved this goal.

Take Fig. \ref{network} as one example, Consider the weight $w_2$, which is representative of weights corresponding to neurons in the final layer. The gradient element for this weight is:
\begin{equation}\label{gradientfinal}
    \frac{\partial C}{\partial w_2} = \frac{\partial C}{\partial y_1}
    \frac{\partial y_1}{\partial w_2}.
\end{equation}
According to Eq. \eqref{costclassical}, $\partial C/\partial y_1 = y_1-d_1$. And $\partial y_i/\partial w_2$ is the operation within one neuron, which can be easily acquired according to Eq. \eqref{neuronfunc}. 

Next, we consider evaluating the gradient concerning $w_1$, which is representative of weights in the middle layer.
\begin{equation}\label{grafinal2}
        \frac{\partial C}{\partial w_1} = \frac{\partial C}{\partial y_{m1}} \frac{\partial y_{m1}}{\partial w_1}
        = \left( \sum_i \frac{\partial C}{\partial y_i} \frac{\partial y_i}{\partial y_{m1}}  \right) \frac{\partial y_{m1}}{\partial w_1}.
\end{equation}
According to Eq. \eqref{gradientfinal},  $\partial C/\partial y_i$ is already known if all the gradients of weights corresponding to neurons in the final layers are obtained, which can be reused, and other partial derivatives are all within one neuron.
Moving back, $\partial C/\partial w_0$ can be analyzed similarly.

Therefore, when training classical NN models, one can first execute the NN and record the output ($y$) for every neuron. When evaluating gradients, weights of neurons corresponding to the final layer can be first evaluated, whose information can be reused when evaluating gradients for neurons corresponding to former layers. 
Gradient evaluation with this back-forward propagation of information is called the backpropagation algorithm, whose query complexity is $O(M)$, which establishes a reduction compared to the directly finite difference method. Using this method, we do not need to execute NNs for every weight and this makes it scalable for training NNs even with huge sizes.

\subsection{Quantum Neural Networks}\label{qnnreview}

To make it convenient for the latter analysis, we introduce the unitary coupled-cluster singles and doubles ansatz \cite{UCC} and the hardware-efficient ansatz (HEA) \cite{hea} in this section.

\subsubsection{Unitary coupled-cluster singles and doubles ansatz}\label{UCCreview}

In quantum chemistry simulations, the unitary coupled-cluster (UCC) ansatz is widely applied. It is derived from the coupled-cluster theory \cite{cc1,cc2}, which applies symmetry-conserved excitation operators on some initial states, usually the Hartree-Fock (HF) state, to expand wavefunctions in the target subspace. 

Denote the number of spin-orbitals and electrons of a given system as $n_o$ and $n_e$. And order the $n_o$ spin-orbitals from 1 to $n_o$, whose corresponding energies are in non-decreasing order. Then the HF state $ |\psi_{\operatorname{HF}}\rangle = | 1,1,\cdots,1,0,0,\cdots,0\rangle$ with exactly $n_e$ 1s and $n_o-n_e$ 0s is the state with the lowest energy when ignoring interaction energies, which is usually served as ground state approximations.

When considering the interaction energies, the ground state should be $|\psi\rangle = \sum_{ |\psi_i\rangle\in S} a_i |\psi_i\rangle$, where $a_i$ are coefficients and all states in the set $S$ satisfying the condition that the Hamming weight, i.e, the sum of all 1s is exactly $n_e$. Starting from the $|\psi_{\operatorname{HF}}\rangle$, some symmetry-conserved operations can be applied to expand the target subspace spanned by $S$. This can be realized with the fermionic creation(annihilation) operators $a_j^{\dagger}(a_j)$. For instance, the operator $a_i^{\dagger}a_{\alpha}$ can excite one electron from the $\alpha-\operatorname{th}$ spin-orbital to the $i-\operatorname{th}$ one and will result in 0 (not the vacuum state) if the $\alpha-\operatorname{th}$ orbital has no electron or the $i-\operatorname{th}$ already has one electron. Therefore, we can define it as a single-excitation operator. Double-excitation operator $a_i^{\dagger}a_j^{\dagger}a_\alpha a_\beta$ can be similarly defined.
Since considering all excitations will cost huge resources, we usually consider the single- and double-excitations, and the UCC ansatz with only the single- and double-excitation is called the UCCSD ansatz:
\begin{equation}
    |\psi_{\operatorname{UCCSD}}(\bm{\theta})\rangle = U_{\operatorname{UCCSD}}(\bm{\theta}) |\psi_{\operatorname{HF}}\rangle,
\end{equation}
where the QNN has the form:
\begin{equation}
    U_{\operatorname{UCCSD}}(\bm{\theta}) = e^{T-T^{\dagger}},
\end{equation}
where $T=T_1+T_2$ are linear combinations of excitation operators, which are expressed as:
\begin{align}
    T_1 =& \sum_{ \substack{\alpha=\{1,2,\cdots,n_e\},\\ i=\{n_e+1,\cdots,n_o\}}   } \theta_{i\alpha} a_i^{\dagger} a_{\alpha}, \\
    T_2 =& \sum_{ \substack{\alpha,\beta=\{1,2,\cdots,n_e\}, \\ i,j=\{n_e+1,\cdots,n_o\},\\ \alpha<\beta,i<j}  } \theta_{ij\alpha\beta} a_i^{\dagger} a_j^{\dagger} a_{\alpha} a_{\beta},
\end{align}
where $\bm{\theta}=\{ \theta_{i\alpha},\theta_{ij\alpha\beta} \}$ is the parameter vector. Therefore:

\begin{equation}
        T-T^{\dagger} = \sum_{\substack{\alpha=\{1,2,\cdots,n_e\},\\ i=\{n_e+1,\cdots,n_o\}} } \theta_{i\alpha} (a_i^{\dagger} a_{\alpha} - a_{\alpha}^{\dagger} a_{i}) 
        +  \sum_{\substack{\alpha,\beta=\{1,2,\cdots,n_e\}, \\ i,j=\{n_e+1,\cdots,n_o\},\\ \alpha<\beta,i<j}} \theta_{ij\alpha\beta} (a_i^{\dagger} a_j^{\dagger} a_{\alpha} a_{\beta}-a_{\beta}^{\dagger} a_{\alpha}^{\dagger} a_{j} a_{i}).
\end{equation}
To further implement the ansatz on quantum processors, fermionic-to-qubit mappings are required. We apply the Jordan-Wigner (JW) transformation \cite{jw1,jw2}. 
\begin{align}
    a_j^{\dagger} =& \frac 12 \left[\prod_{k<j}Z_k\right] (X_j-iY_j), \\
    a_j =& \frac 12 \left[\prod_{k<j} Z_k\right](X_j+iY_j).
\end{align}
After this, the HF state is mapped to $|1\rangle^{\otimes n_e}\otimes |0\rangle^{\otimes n_o-n_e}$, implying that under JW transformation, the number of qubits required is the same as the number of spin-orbitals: $n=n_o$. And the excitation operator becomes a linear combination of tensor products of Pauli operators (Pauli strings). Finally, the operation $T-T^{\dagger}$ will be a linear combination of Pauli strings. With some orders of Trotter expansion, we have:
\begin{equation}\label{ucctro}
    U_{\operatorname{UCCSD}}(\bm{\theta}) = \prod_l e^{-i\theta'_lP_l},
\end{equation}
where $\theta'$ can be obtained from $\bm{\theta}$.
For every $e^{-i\theta P}$, we can implement it on the quantum processor shown in Fig. \ref{rotpauli}.

\begin{figure}
    \centering
    \subfigure[Quantum circuit for $e^{-i\theta X_0Y_2Z_3}$]{    \begin{tikzpicture}
    \node[scale=1]{
    \begin{tikzcd}
        \lstick{$q_0$} &\gate{H}                   &\ctrl{3}\gategroup[4,steps=5,style={dashed,
        rounded corners,fill=blue!20, inner xsep=2pt},
        background]{{$e^{-i\theta Z_0Z_2Z_3}$}}  &\qw       &\qw                 &\qw       &\ctrl{3}  &\gate{H}                  &\qw \\
        \lstick{$q_1$} &\qw                        &\qw       &\qw       &\qw                 &\qw       &\qw       &\qw                       &\qw \\
        \lstick{$q_2$} &\gate{R_X(-\pi/2)} &\qw       &\ctrl{1}  &\qw                 &\ctrl{1}  &\qw       &\gate{R_X(\pi/2)} &\qw \\
        \lstick{$q_3$} &\qw                        &\targ{}   &\targ{}   &\gate{R_Z(2\theta)}  &\targ{}   &\targ{}   &\qw                       &\qw
    \end{tikzcd}
    };
\end{tikzpicture}\label{rotpauli}}
    \subfigure[Quantum circuit for HEA]{\begin{tikzpicture}
    \node[scale=1]{
    \begin{tikzcd}
    \lstick{$|0\rangle$} & \gate{R_Z}\gategroup[4,steps=7,style={dashed,
rounded corners,fill=blue!20, inner xsep=2pt},
background]{{ repeat P layers}} & \gate{R_X} & \gate{R_Z} & \ctrl{1} & \qw & \qw & \targ{}&\qw \\
    \lstick{$|0\rangle$} & \gate{R_Z} & \gate{R_X} & \gate{R_Z} & \targ{} & \ctrl{1}  & \qw & \qw & \qw\\
    \lstick{$|0\rangle$} & \gate{R_Z} & \gate{R_X} & \gate{R_Z} & \qw & \targ{} & \ctrl{1} & \qw & \qw\\
    \lstick{$|0\rangle$} & \gate{R_Z} & \gate{R_X} & \gate{R_Z} & \qw & \qw & \targ{} & \ctrl{-3} & \qw
    \end{tikzcd}
    };
  \end{tikzpicture}\label{heacir}}
    \caption{(a) Quantum circuit for $e^{-i\theta P}$ with an example of $e^{-i\theta X_0Y_2Z_3}$. The dashed part is the quantum circuit for $e^{-i\theta Z_0Z_2Z_3}$. The H gate and the $R_X$ gate are applied for basis transformation. (b) Quantum circuit for the HEA described in Eq. \eqref{heacirc}.}
\end{figure}
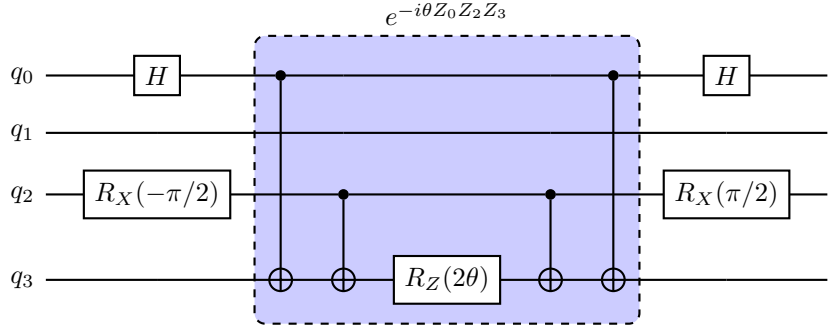
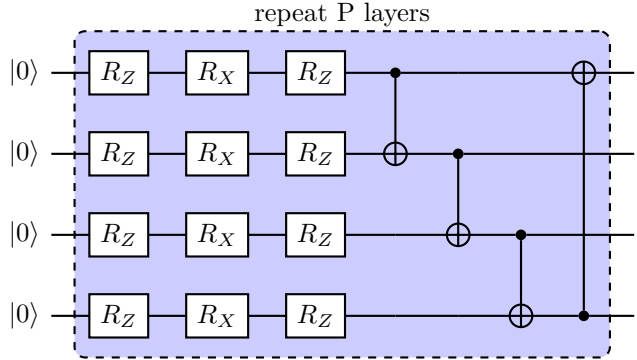

\subsubsection{Hardware-efficient ansatz}\label{heareview}

HEA is a problem-agnostic ansatz, which directly applies easy-implementable quantum gates of the quantum processor. We assume the HEA to be comprised of $P$ blocks, each of which consists of single-qubit rotation and two-qubit entangling operations:
\begin{equation}\label{heacirc}
    U_{\operatorname{HEA}}(\bm{\theta}) = \prod_{p=1}^P U_{\operatorname{entangle}} U_{\operatorname{single}}(\bm{\theta}_p),
\end{equation}
where:
\begin{align}
    U_{\operatorname{entangle}} =& \operatorname{CNOT}_{n,1} \prod_{i=1}^{n-1} \operatorname{CNOT}_{i,i+1},\\
    U_{\operatorname{single}}(\bm{\theta}_p) =& \prod_{i=1}^n R_Z(\theta_p^{i1}) R_X(\theta_p^{i2}) R_Z(\theta_p^{i3}),
\end{align}
where subscripts in CNOT gates represent the control and target qubit, respectively. The quantum circuit for the HEA described here is shown in Fig. \ref{heacir}.

It has been pointed out that HEA has remarkable expressibility \cite{EXPMPQC}. Combined with the fact that HEA is hardware-friendly, it has become the most common-applied QNN model.

\section{Gradients in variational quantum algorithms}\label{gradientindepend}

Training parameters in QNNs is the main step in executing VQAs, which is NP-hard \cite{vqanphard}. On the one hand, cost functions in VQAs are obtained via repeated measurements, and achieving sampling error $\epsilon$ will require sampling $O(1/\epsilon^2)$ times. Then about $10^6$ sampling times is required to reach the widely-applied chemical accuracy $1.6\times 10^{-3}$ Hartree
\footnote{$1 \operatorname{Hartree} = 2625.5\,\operatorname{ kJ/mol}$.}
. On the other hand, problems like barren plateaus can cause exponentially increased sampling times. Together with noise and other factors,  evaluating cost functions in VQAs would be difficult.

Note that in the training process, measuring cost function is mainly used to evaluate gradients. If we apply Eq. \eqref{finitedif} for gradient evaluation, $O(L)$ times of cost function needs to be evaluated. In Sec. \ref{backpropagation}, we introduced that the backpropagation algorithm can be used to reduce the times required for executing classical NNs, Therefore, it would be natural to ask whether such type of methods can be applied to reduce the gradient-evaluation cost when training QNNs.

First of all, the backpropagation algorithm cannot be implemented directly because a QNN is a parameterized unitary transformation that maps an initial state to the ansatz, without recording to inter-layer state, which, however, is required when performing backpropagation algorithms. As introduced in \cite{quantumback}, the backpropagation scaling for training QNNs is only possible when we have multiple copies of the ansatz.

Next, we consider whether there is some dependency between the gradient elements. If it is the case, after evaluating some gradient elements, we can apply this relation to directly compute the remaining gradient elements without running the QNN. However, we will show below that this is also unavailable.

\begin{mythm}\label{gradientind}
    For a general ansatz $U(\bm{\theta})$ with $L$ independent parameters, and the cost function defined as the expectation value under some Hamiltonian $H$, we need at least $O(L)$ times for evaluating the cost function to obtain the gradient.
\end{mythm}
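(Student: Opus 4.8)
The plan is to prove a worst-case \emph{lower} bound on the number of cost-function evaluations by showing that the $L$ components of $\nabla C$ carry $L$ independent pieces of information, whereas each evaluation of $C$ returns only a single real scalar. First I would pin down the analytic structure of the cost function. For both families of Sec.~\ref{qnnreview}, every parameter $\theta_l$ enters through a gate $e^{-i\theta_l P_l}$ with $P_l$ a Pauli string, so $P_l^2=I$ and, holding all other parameters fixed, $C$ is a degree-one trigonometric polynomial in $\theta_l$, namely $C=a_l+b_l\cos(2\theta_l)+c_l\sin(2\theta_l)$, with $a_l,b_l,c_l$ fixed by $H$ and the remaining gates. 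Differentiating gives $\partial_l C=-2b_l\sin(2\theta_l)+2c_l\cos(2\theta_l)$, so each partial derivative is governed by the two Fourier coefficients $b_l,c_l$, which do \emph{not} enter the constant term $a_l$ shared with neighbouring parameters. For a general ansatz the same picture holds with the degree replaced by the finite number of distinct eigenvalue gaps of the generator, so the argument is not special to the Pauli case.

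The second step is an independence claim, which directly answers the question raised just before the statement (whether some algebraic relation lets one gradient element be inferred from the others for free). For each $l$ I would exhibit a perturbation of the problem, e.g. a local perturbation of $H$ supported where $P_l$ acts, that changes $b_l,c_l$, and hence $\partial_l C$ at the target point $\bm{\theta}^{*}$, while leaving every other partial derivative $\partial_{l'}C$ with $l'\neq l$ unchanged. This shows that the map $\bm{\theta}\mapsto\nabla C$ has $L$ functionally independent components, so no fixed relation among the gradient elements exists that could be exploited to skip evaluations.

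The third step converts independence into a query lower bound by an adversary argument. Any procedure is a choice of $N$ (possibly adaptive) evaluation points $\bm{\theta}^{(1)},\dots,\bm{\theta}^{(N)}$ followed by a reconstruction of $\nabla C(\bm{\theta}^{*})$ from the $N$ returned scalars. Using the independence established above, I would construct a family of admissible cost functions, all of the allowed circuit-plus-Hamiltonian form, that agree on any prescribed set of $N<L$ points yet differ in their gradient at $\bm{\theta}^{*}$; since the procedure observes identical data it cannot be correct for all of them, forcing $N\ge L$. The parameter-shift rule, which extracts each $\partial_l C$ from two shifted evaluations, simultaneously supplies the matching upper bound of $2L$, so the bound is tight at order $L$, i.e.\ $\Omega(L)$ evaluations.

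The hard part will be the adversarial construction of the third step. Because we actually know the functional form of $C$, the naive information count ``$L$ unknowns, one scalar per query'' is not by itself valid: a cleverly placed evaluation could in principle constrain several coefficients at once, just as the parameter-shift rule already obtains one derivative from two reusable values. The delicate point is therefore to guarantee enough non-degeneracy that the contribution of each parameter to the gradient cannot be bundled into fewer than $L$ scalars. Finally I would contrast this with Sec.~\ref{backpropagation}: classically the intermediate neuron outputs are stored and one backward pass reuses them across all weights, whereas here each $\theta_l$ sits inside a non-commuting unitary conjugation and the inter-layer states are inaccessible, since measurement collapses them. This structural absence of reuse is exactly why the backpropagation speedup is unavailable and the cost remains $\Omega(L)$ cost-function evaluations.
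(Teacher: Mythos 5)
Your high-level plan (an independence statement about the $L$ gradient components, converted into a query lower bound) has the same skeleton as the paper's proof, but both of its load-bearing steps are left unproven, and the mechanism you propose for the first one is unsound. Since each gradient component is linear in the Hamiltonian, $\partial_l C=\operatorname{Tr}[M_lH]$ with $M_l$ the derivative of $U(\bm{\theta})\rho_0U^{\dagger}(\bm{\theta})$ at $\bm{\theta}^{*}$, a perturbation $\delta H$ that moves $\partial_lC$ while fixing all $\partial_{l'}C$, $l'\neq l$, exists if and only if $M_l\notin\operatorname{span}\{M_{l'}\}_{l'\neq l}$. So your step 2 presupposes exactly the linear-independence fact that is the entire content of the theorem, and ``local support where $P_l$ acts'' does not enforce it, because gates carrying other parameters act on those same qubits. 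Worse, the claim is false for the paper's own HEA: in Eq.~\eqref{heacirc} the first gate on every qubit is an $R_Z$ acting directly on $|0\rangle$, so for those parameters $\partial_lC\equiv 0$ for \emph{every} $H$ (i.e.\ $M_l\equiv 0$), and no perturbation of $H$ can change them. This also shows that in your model --- circuit known, adversary ranging only over $H$ --- the stated bound can actually fail (those components are known to vanish, so they come for free), hence your route cannot yield the theorem in the generality claimed. Your step 3, which you acknowledge is the hard part, reduces by the same linearity to the same question: two instances agreeing on $N$ query points but differing in $\nabla C(\bm{\theta}^{*})$ exist precisely when the gradient functionals do not all lie in the span of the $N$ evaluation functionals, which again requires knowing that their span has dimension $L$.

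The paper supplies precisely this missing content, and it does so in a different coefficient space that sidesteps the degeneracy above. It writes $U(\bm{\theta})=\prod_l(\cos\theta_l I-i\sin\theta_l P_l)W_l$ and expands $C(\bm{\theta})=\sum_{pq}a_{pq}(\bm{\theta})f_{pq}$, where $pq$ runs over $\{00,01,10,11\}^L$, the $a_{pq}$ are explicit products of $\cos^2\theta_l$, $\sin\theta_l\cos\theta_l$, $\sin^2\theta_l$, and the trace terms $f_{pq}$ --- not $H$ --- are treated as the free unknowns the algorithm has no access to. The lower bound then follows from linear independence of the coefficient vectors $g^l_{pq}=\partial a_{pq}/\partial\theta_l$, proved by induction: restricting to indices $pq\in\{00,11\}^L$, pairing indices that differ in the first slot, and using $a_{0,w'}+a_{1,w'}=a_{w'}$ (i.e.\ $\cos^2\theta_1+\sin^2\theta_1=1$) to eliminate the first parameter and reduce to the $(L-1)$-parameter case, forcing every $m_l=0$. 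Note that this argument is purely about the trigonometric structure and holds even for the degenerate HEA parameters, because the coincidences among the actual $f_{pq}$ that make those gradients vanish are declared invisible to the algorithm. To repair your proposal you would either need to reproduce an argument of this algebraic kind, or add explicit non-degeneracy hypotheses (generic $\bm{\theta}^{*}$ and circuits with linearly independent $M_l$) --- but then you would be proving a strictly weaker statement than the one in question.
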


The proof of this Theorem is provided below. According to this theorem, the costs for evaluating gradients in training QNNs depend on the number of parameters. This dependency heavily limits the scalability of VQAs.

In ML tasks, it is common to improve performance by increasing the number of parameters. Since there is no dependency of the gradient evaluation cost and the NN depth, such a performance-improving strategy works. However, scalability limitation makes increasing parameters not a good choice in VQAs. Since the parameter number naturally grows with the problem size or complexity, applying VQAs would be challenging.

\begin{proof}
    Suppose the PQC has the form:
    \begin{equation}\label{ansatz}
        U(\bm{\theta}) = \prod_{l=1}^L U_l(\theta_l) W_l = \prod_{l=1}^L (\cos \theta_l I - i \sin \theta_l P_l) W_l,
    \end{equation}
    where $\bm{\theta} = \{ \theta_1,\theta_2,\cdots,\theta_L \}$ is a vector of independent parameters. $P_l$ is a Hermitian operator and $W_l$ is the un-parameterized gate. Denote the initial state as $\rho_0$, then the cost function is:
    \begin{equation}\label{cost}
        C(\bm{\theta}) = \operatorname{Tr} [    U(\bm{\theta}) \rho_0 U^{\dagger} (\bm{\theta}) H   ].
    \end{equation}
    Expand Eq. \eqref{cost} according to Eq. \eqref{ansatz}, we have:
    \begin{equation}
        C(\bm{\theta}) = \operatorname{Tr} \left[   \prod_{l=1}^L (\cos \theta_l I - i \sin \theta_l P_l) W_l \rho_0 \prod_{l=L}^1 W_l^{\dagger} (\cos \theta_l I + i \sin \theta_l P_l)  H          \right].
    \end{equation}
    Observe there are 4 terms for every $\theta_l$. We view $\cos \theta_l$ and $\sin \theta_l$ as coefficients. Then the function for each term in the cost function is:
    \begin{equation}
    \begin{cases}
    \cos \theta_l \cos \theta_l, \quad & f(I,I); \\
    \cos \theta_l \sin \theta_l, \quad & f(I,iP_l); \\
    \sin \theta_l \cos \theta_l, \quad & f(-iP_l,I); \\
    \sin \theta_l \sin \theta_l, \quad & f(-iP_l,iP_l). \\
    \end{cases}
    \end{equation}
    Note that such four cases can be described by two bits $p_lq_l$ and we define the above four cases mean $p_lq_l=00,01,10,11$, respectively. Then the cost function is expressed as:
    \begin{equation}
        C = \sum_{    pq = \{ p_lq_l|p_lq_l=00,01,10,11  \}_{l=1}^L        } a_{pq} f_{pq},
    \end{equation}
    where:
    \begin{equation}
        a_{pq} = \prod_l a_{p_lq_l}, a_{p_lq_l} = \begin{cases}
            \cos^2\theta_l, & p_lq_l=00, \\
            \sin\theta_l\cos\theta_l,&p_lq_l = 01,10,\\
            \sin^2\theta_l, & p_lq_l=11.
        \end{cases}
    \end{equation}
    Denote:
    \begin{equation}
        g^l_{pq} = \frac{\partial a_{pq}}{\partial \theta_l}.
    \end{equation}
    Then the gradient is:
    \begin{equation}
        \frac{\partial C}{\partial \theta_l} = \sum_{pq} g_{pq}^l f_{pq}.
    \end{equation}
    We assume $\{f_{pq}\}$ are unknown. Computing $\partial C/\partial\theta_l$ through $\{f_{pq}\}$ requires computing almost $4^L$ times, which is impractical. 
    
    If we can obtain the full gradient by evaluating the QNN $k<O(L)$ times, then after evaluating some gradient elements we can obtain the others. Due to the unknown functions $\{f_{pq}\}$, unknown elements must be a linear combination of known gradients. If such a case exists, we consider the easiest case that we have obtained $L-1$ gradient elements, the remaining gradient can be expressed as:
    \begin{equation}
        \frac{\partial C}{\partial \theta_l} = \sum_{k\neq l} m_k  \frac{\partial C}{\partial \theta_k}.
    \end{equation}
    This means that the vectors $\{g^k_{pq}\}_{k=1}^L$ are linear dependent. Then there exists a set of numbers $\{m_i\}_{i=1}^L$ that are not all 0:
    \begin{equation}
        \sum_{l=1}^L m_l \frac{\partial C}{\partial \theta_l} = 0.
    \end{equation}
    This means:
    \begin{equation}
        \sum_{l=1}^L m_l g^l_{pq} = 0, \forall pq = \{p_lq_l\}.
    \end{equation}
    We consider the following $2^L$ elements with indices:
    \begin{equation}
        pq = \{00,11\}^L.
    \end{equation}
    And we re-order them as $w_l=p_lq_l$. Then the above equation will become:
    \begin{equation}
        \sum_{l=1}^L m_l g^l_w=0, \forall w=\{w_l\}=\{0,1\}^L.
    \end{equation}
    Define $w'=\{w_l\}_{l=2}^L$. Consider every pair of index $0,w'$ and $1,w'$, we have:
    \begin{align}
        \sum_{l=1}^L m_l g^l_{0,w'}=0, \\
        \sum_{l=1}^L m_l g^l_{1,w'}=0.
    \end{align}
    Add the two equations together:
    \begin{equation}\label{ll1}
        \sum_{l=1}^L m_l \left( g^l_{0,w'} +g^l_{1,w'} \right) =0.
    \end{equation}
    
    Observe:
    \begin{equation}
        g^l_{0,w'} + g^l_{1,w'} = \frac{\partial a_{0,w'}}{\partial \theta_l} + \frac{\partial a_{1,w'}}{\partial \theta_l} = \frac{\partial}{\partial \theta_l} (a_{0,w'}+a_{1,w'}).
    \end{equation}
    While:
    \begin{equation}
        a_{0,w'}+a_{1,w'}= \cos^2\theta_l a_{w'} + \sin^2\theta_l a_{w'}=a_{w'},
    \end{equation}
    we have:
    \begin{equation}
        g^0_{0,w'} +g^0_{1,w'} = 0.
    \end{equation}
    Then Eq. \eqref{ll1} will become:
    \begin{equation}
        \sum_{l=2}^L m_l \left( g^l_{0,w'} +g^l_{1,w'} \right) =     \sum_{l=2}^L m_l \frac{\partial a_{w'}}{\partial \theta_l} = 0.
    \end{equation}
    This is exactly the $(L-1)$-parameter case.
    Repeat this process and we will eventually have:
    \begin{equation}
        m_L \frac{\partial a_{w_L}}{\partial \theta_L} = 0, w_L=0,1.
    \end{equation}
    Since $a_{w_L=0}=\cos^2\theta_L$, $\partial a_{w_L=0}/\partial \theta_L = -\sin (2\theta_L)$. Then we have $m_L=0$ except when $\theta_l=0$. Moving back, we will obtain $m_{L-1}=0$. Finally, $m_l=0,\forall l$. This conflicts with the assumption that the vectors are linearly dependent. Then the proof is now finished.
\end{proof}

\section{Time costs for executing variational quantum algorithms}\label{costvaq}

\begin{figure}[ht]
    \centering
    \includegraphics[width=1\linewidth]{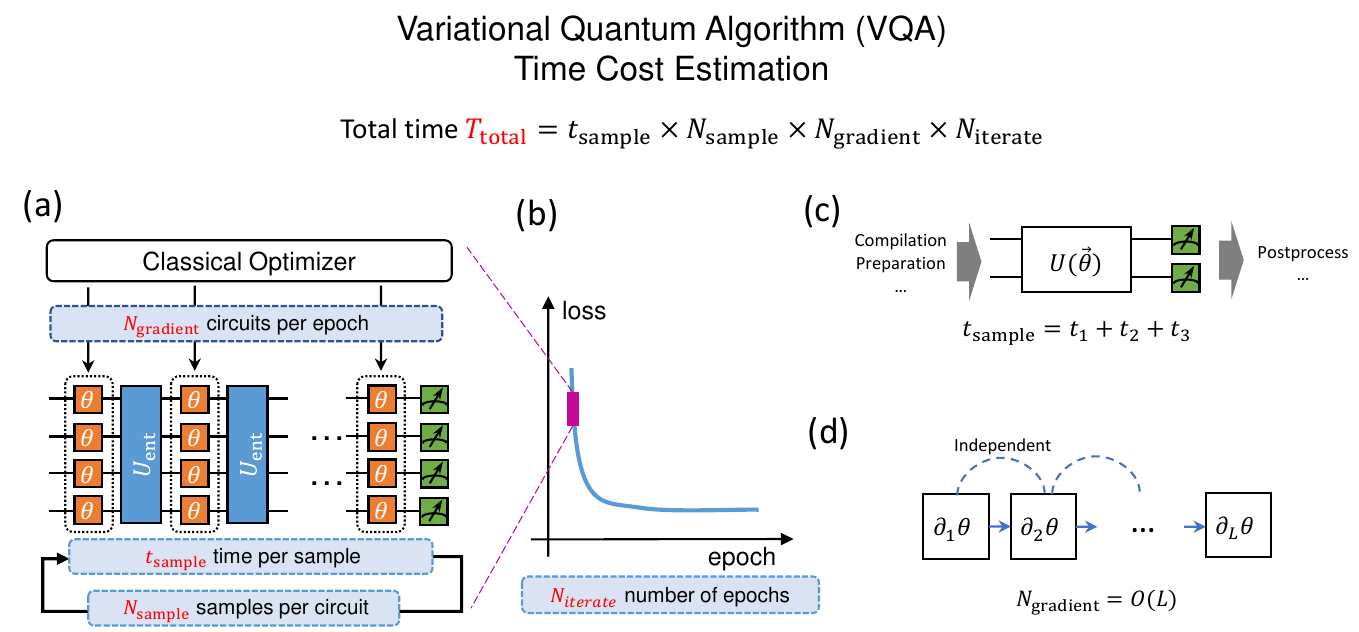}
    \caption{Sketch map for estimating the total time cost for funning VQAs.
    (a) For every step of optimization, we need to evaluate $N_{\operatorname{gradient}}$ times of cost functions to obtain the gradient. To evaluate one cost function, we need to sample the ansatz $N_{\operatorname{sample}}$ times, and the time required for each sample is $t_{\operatorname{sample}}$.
    (b) Sketch diagram for the change of loss function concerning the number of epochs. We denote the total epochs required as $N_{\operatorname{iterate}}$. The time for performing one step of optimization is shown in (a).
    (c) The process of sampling a quantum circuit, which consists of initializing the hardware, applying quantum circuits, and performing quantum measurements. Then $t_{\operatorname{sample}}$ is a sum of time for these sub-processes.
    (d) The process for evaluating gradients. Since the gradient elements are independent, we can only evaluate the partial derivatives one by one. Therefore, $N_{\operatorname{gradient}}=O(L)$.
    }
    \label{timeconsider}
\end{figure}

In this part, we estimate the time cost for executing VQAs, especially when using the UCCSD ansatz and HEA introduced in Sec. \ref{qnnreview}. Since VQA is executed by repeatedly measuring cost functions and updating parameters, the total time of running a VQA is:
\begin{equation}
    t_{\operatorname{VQA}} = t_{\operatorname{cost}} \times N_{\operatorname{cost}},
\end{equation}
where $t_{\operatorname{cost}}$ is the time needed to obtain a cost function and $N_{\operatorname{cost}}$ is the number of cost functions needed to obtain to finish the algorithm.

On the one hand, cost functions in VQAs are obtained via repeated sampling of the ansatz. Then: $t_{\operatorname{cost}} = t_{\operatorname{sample}} \times N_{\operatorname{sample}}$, where $t_{\operatorname{sample}}$ and $N_{\operatorname{sample}}$ are the time needed to sample the ansatz once and the number of samples needed to obtain a cost function, respectively. On the other hand, $N_{\operatorname{cost}}$ depends on the optimization algorithms applied. When using gradient-based algorithms, we have:
$N_{\operatorname{cost}} = N_{\operatorname{gradient}} \times N_{\operatorname{iterate}}$, where $N_{\operatorname{gradient}}$ and $N_{\operatorname{iterate}}$ are the number of cost functions needed to evaluate to obtain one gradient and the number of iteration times, respectively. Below we will analyze the above four factors. And the sketch diagram for the analysis is shown in Fig. \ref{timeconsider}.

\paragraph{$\bm{N_{\operatorname{gradient}}}$} As described in Theorem \ref{gradientind}, we can view $N_{\operatorname{gradient}}$ simply as the number of parameters in the ansatz. In the UCCSD ansatz, the number of parameters is exactly the sum of single- and double-excitation terms:
\begin{equation}\label{lucc}
    L_{\operatorname{UCCSD}} = C_{n_e}^1 C_{n_o-n_e}^1  + C_{n_e}^2 C_{n_o-n_e}^2,
\end{equation}
where 
\begin{equation}
    C_n^m = \frac{n!}{m!(n-m)!}.
\end{equation}
In HEA, parameters only appear in the single-qubit rotation operations. In each of the $P$ blocks, we apply three single-qubit gates on every qubit, then we have:
\begin{equation}
    L_{\operatorname{HEA}} = 3nP.
\end{equation}

\paragraph{$\bm{t_{\operatorname{sample}}}$} Generally, sampling a quantum circuit includes three parts: initializing the quantum hardware, running the circuit, and measuring the outcome. Then:
\begin{equation}
    t_{\operatorname{sample}} = t_{\operatorname{initial}} + t_{\operatorname{gate}} + t_{\operatorname{read}}.
\end{equation}
On current superconducting hardware, $t_{\operatorname{initial}}$
and $t_{\operatorname{read}}$ together will reach the order of 1 $\operatorname{\mu s}$ \cite{hardware1,hardware2}. The time of applying a single- and two-qubit gate are $t_{\operatorname{single}}=30\,\operatorname{ns}$ and $t_{\operatorname{double}}=60\,\operatorname{ns}$ \cite{gatetime}, respectively.
\footnote{The detailed time differs in systems but is in the same order. We will apply the averaged and experienced values.}
Then:
\begin{equation}\label{depth}
    t_{\operatorname{gate}} = l_{\operatorname{single}} \times t_{\operatorname{single}} + l_{\operatorname{double}} \times t_{\operatorname{double}},
\end{equation}
where $l$ is the single- and two-qubit gate layer depth, where two gates in the same layer indicates they can be applied at the same time. Since the time of initializing the hardware and measuring the outcome is approximate to applying $10^2$ quantum gates, then we will ignore this cost and only take the circuit running time as $t_{\operatorname{sample}}$. The following theorems provide the value of $t_{\operatorname{gate}}$ for the UCCSD ansatz and HEA.

\begin{mythm}\label{uccsddepth}
    For a many-body system with $n_o$ spin-orbitals and $n_e$ electrons, the gate layer depth for the UCCSD ansatz under the first-order Trotter expansion is:
    \begin{align}
        l_{\operatorname{single}} =& 6  C_{n_e}^1 C_{n_o-n_e}^1  +24 C_{n_e}^2 C_{n_o-n_e}^2 , \\
        l_{\operatorname{double}} =&2  n_oC_{n_e}^1 C_{n_o-n_e}^1 + \frac 83 (2n_o+1) C_{n_e}^2 C_{n_o-n_e}^2.
    \end{align}
\end{mythm}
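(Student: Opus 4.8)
The plan is to reduce the depth count to a per-excitation calculation. By Eq.~\eqref{ucctro} the first-order Trotterized UCCSD operator is a product $\prod_l e^{-i\theta'_l P_l}$ of Pauli-string exponentials, and each factor is compiled exactly as in Fig.~\ref{rotpauli}: a layer of single-qubit basis-change gates (an $H$ on every $X$ site, an $R_X(\mp\pi/2)$ on every $Y$ site), a CNOT ladder that accumulates the parity of $\operatorname{supp}(P_l)$ onto one target, the rotation $R_Z(2\theta'_l)$, and then the mirror image. Because the Trotter factors are applied sequentially, the total single- and two-qubit layer depths are just the sums of the per-factor depths, so it suffices to (i) count how many Pauli strings the Jordan--Wigner image of each excitation contains, (ii) read off the single- and two-qubit depth of one such factor, and (iii) sum these over all single and double excitations.

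For step (i): expanding $a_i^\dagger a_{\alpha}-\text{h.c.}$ under the JW map gives $2^2=4$ Pauli terms, of which only those with an odd number of $Y$'s survive the anti-Hermitian combination, leaving $2$ strings; likewise $a_i^\dagger a_j^\dagger a_{\alpha} a_{\beta}-\text{h.c.}$ produces $2^4=16$ terms and keeps the $2^3=8$ odd-$Y$-parity strings. Every one of these strings carries an $X$ or $Y$ (never $I$ or $Z$) on the active orbital qubits and a fixed $Z$-string in between, so all strings belonging to one excitation share the same support. This makes the single-qubit count immediate: one exponential contributes depth $3$ (front basis change, $R_Z$, back basis change, with everything else parallelizable), so $l_{\operatorname{single}}=3\cdot 2\,C_{n_e}^1 C_{n_o-n_e}^1+3\cdot 8\,C_{n_e}^2 C_{n_o-n_e}^2$, which is exactly the claimed $6\,C_{n_e}^1 C_{n_o-n_e}^1+24\,C_{n_e}^2 C_{n_o-n_e}^2$.

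For the two-qubit depth I would use that one exponential of a Pauli string with support size $s$ costs a CNOT ladder of depth $2(s-1)$ (compute and uncompute). A single excitation $\alpha\to i$ has a contiguous JW support on qubits $\alpha,\dots,i$, so $s-1=i-\alpha$; summing the two strings over all excitations with the identity $\sum_{\alpha\le n_e<i}(i-\alpha)=\tfrac12 n_e n_o(n_o-n_e)$ (equivalently, mean span $n_o/2$) gives the coefficient $2n_o$. For a double excitation the governing length is the span from the lowest occupied to the highest virtual orbital; using $\mathbb{E}[\min]=\tfrac{N+1}{3}$ and $\mathbb{E}[\max]=\tfrac{2(N+1)}{3}$ for a random pair, the mean span is $\tfrac{2n_o+1}{3}$, which is the origin of the factor $(2n_o+1)$. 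Combining this with the number of strings and the per-string ladder depth, and evaluating the closed forms $\sum_{\alpha<\beta}(\beta-\alpha)=\tfrac{n_e(n_e^2-1)}{6}$ and $\sum_{i<j}(j-i)=\tfrac{(n_o-n_e)((n_o-n_e)^2-1)}{6}$, turns the summed span into the coefficient $\tfrac{8}{3}(2n_o+1)$.

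The main obstacle is the precise two-qubit bookkeeping, not the single-qubit count or the leading $n_o$ scaling. One must fix exactly which qubits the CNOT ladder touches for a double excitation---in particular whether it skips the $Z$-free gap between the occupied and virtual blocks or runs across the full span---and whether the uncompute ladder of one Pauli exponential cancels against the compute ladder of the next factor sharing the same support. These choices shift the additive constant in $l_{\operatorname{double}}$ (the leading $16n_o/3$ is robust, but the constant is sensitive to them), so the delicate part of the proof is to commit to the compilation of Fig.~\ref{rotpauli}, track the ladder cancellations consistently with the single-qubit count, and only then feed the combinatorial span-sums above into the total to land on the stated constants.
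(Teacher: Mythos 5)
Your framework is the same one the paper uses: expand the first-order Trotter product of Eq.~\eqref{ucctro}, compile each $e^{-i\theta P}$ as in Fig.~\ref{rotpauli} (3 single-qubit layers, $2(s-1)$ two-qubit layers for support size $s$), count 2 Jordan--Wigner strings per single excitation and 8 per double, and sum the spans over excitations. Your single-qubit count and your single-excitation two-qubit count (via $\sum_{\alpha\le n_e<i}(i-\alpha)=\tfrac12 n_e n_o(n_o-n_e)$) agree with the paper's, and your mean-span identity $\mathbb{E}[i-\beta]=(2n_o+1)/3$ is a clean equivalent of the paper's explicit evaluation of $\sum_{i,\beta}(i-\beta)(n_e-\beta)(i-n_e-1)$; that part is a nicer route to the origin of the $(2n_o+1)$ factor than the nested sum.

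The genuine gap is in the double-excitation two-qubit coefficient. Your own ingredients --- 8 strings per double excitation, per-string depth $2(i-\beta)$, mean span $(2n_o+1)/3$ --- combine to $8\times 2\times\tfrac{2n_o+1}{3}\,C_{n_e}^2C_{n_o-n_e}^2=\tfrac{16}{3}(2n_o+1)C_{n_e}^2C_{n_o-n_e}^2$, which is \emph{twice} the stated $\tfrac83(2n_o+1)C_{n_e}^2C_{n_o-n_e}^2$; no consistent application of the rules you commit to lands on the theorem. (The paper's proof arrives at $\tfrac83$ only because its final sum carries a factor $8$ rather than $16$, i.e.\ it effectively assigns per-string depth $(i-\beta)$ in contradiction with the $2k-2$ rule it states two lines earlier; your proposal inherits exactly this unresolved factor.) Your characterization of the ambiguity is also wrong in kind: whether ladders cancel between consecutive factors, or whether the ladder skips the $Z$-free gap between $\alpha$ and $j$ (a real issue, since under JW the strings act trivially there and Fig.~\ref{rotpauli}'s compilation skips identity qubits), changes the coefficient \emph{multiplicatively}, not ``the additive constant''; under your bookkeeping the leading term would be $32n_o/3$, so the $16n_o/3$ you call robust is precisely what is not. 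Finally, the closed forms $\sum_{\alpha<\beta}(\beta-\alpha)$ and $\sum_{i<j}(j-i)$ you invoke are irrelevant to the span sum --- linearity plus your min/max expectations already finish that computation --- so the step that ``turns the summed span into the coefficient $\tfrac83(2n_o+1)$'' is asserted, not derived.
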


\begin{proof}
    As introduced in Sec. \ref{UCCreview}, implementing the UCCSD ansatz on the quantum hardware requires transforming the ansatz into the form of Eq. \eqref{ucctro}. According to Fig. \ref{rotpauli}, for a $k$-local Pauli operator, which means that the operator acts non-trivially on $k$ qubits, the single-qubit and two-qubit depth of implementing $e^{-i\theta P}$ is 3 and $2k-2$, respectively. Therefore, to determine the gate layer depth with the first-order Trotter expansion, we just need to determine the number of operators $e^{-i\theta P}$ in Eq. \eqref{ucctro} and the locality for each operator $P$.

    Consider the single-excitation term, for every pair of $i>\alpha$, the single-excitation term $a_i^{\dagger} a_{\alpha} - a_{\alpha}^{\dagger} a_{i}$ is mapped with the JW transformation as:
    \begin{equation}\label{uccsmap}
        \begin{aligned}
            a_i^{\dagger}a_{\alpha} - a_{\alpha}^{\dagger}a_i = & \left[ \prod_{k<i} Z_k \right] (X_i - i Y_i) \left[ \prod_{k<\alpha} Z_k \right] (X_\alpha + i Y_\alpha)  \\
            & - \left[ \prod_{k<i} Z_k \right] (X_i + i Y_i) \left[ \prod_{k<\alpha} Z_k \right] (X_\alpha - i Y_\alpha)  \\
            =& Z_{\alpha} (X_\alpha + i Y_\alpha) \left[ \prod_{\alpha<k<i} Z_k  \right]   (X_i-i Y_i) \\
            &- Z_{\alpha} (X_\alpha - i Y_\alpha) \left[ \prod_{\alpha<k<i} Z_k  \right]   (X_i+i Y_i) \\
            =& 2 i X_\alpha \left[ \prod_{\alpha<k<i} Z_k  \right]  Y_i  - 2 i Y_\alpha \left[ \prod_{\alpha<k<i} Z_k  \right]  X_i.
        \end{aligned}
    \end{equation}
    After mapping, $a_i^{\dagger} a_{\alpha} - a_{\alpha}^{\dagger} a_{i}$ is mapped to a sum of 2 Pauli strings, each of which is $(i-\alpha+1)$-local. Similar to Eq. \eqref{uccsmap}, for every group of $i>j>\alpha>\beta$, the double-excitation term $a_i^{\dagger} a_j^{\dagger} a_{\alpha} a_{\beta}-a_{\beta}^{\dagger} a_{\alpha}^{\dagger} a_{j} a_{i}$ is mapped to a sum of 8 Pauli strings, each of which is $(i-\beta+1)$-local.

    Now we are going to determine the circuit depth. Since every $e^{-i\theta P}$ will cause 3 single-qubit circuit depth, and according to Eq. \eqref{lucc}, the number of single-excitation and double-excitation terms are $ C_{n_e}^1 C_{n_o-n_e}^1$ and $ C_{n_e}^2 C_{n_o-n_e}^2$, respectively. Then:
    \begin{equation}
        \begin{aligned}
            l_{\operatorname{single}} &= C_{n_e}^1 C_{n_o-n_e}^1 \times 2 \times 3 +  C_{n_e}^2 C_{n_o-n_e}^2 \times 8 \times 3 \\
            &=6 C_{n_e}^1 C_{n_o-n_e}^1 + 24 C_{n_e}^2 C_{n_o-n_e}^2.
        \end{aligned}    
    \end{equation}

    The case for the two-qubit depth is more complex. For every pair of $i,\alpha$, there are 2 Pauli strings for each single-excitation term, the two-qubit circuit depth for each of which is $2(i-\alpha+1)-2=2(i-\alpha)$. Therefore, the two-qubit gate layer depth with the single-excitation term is:
    \begin{equation}\label{ls}
        \begin{aligned}
            \sum_{i=n_e+1}^{n_e+(n_o-n_e)} \left(  \sum_{\alpha=1}^{n_e} 4(i-\alpha )  \right) &= \sum_{i=n_e+1}^{n_e+(n_o-n_e)} \left( 4in_e - \frac{n_e(n_e+1)}{2} \times 4\right) \\
            &= 4n_e \frac{   (n_e+1+n_o) (n_o-n_e)    }{2} -2 n_e(n_e+1)(n_o-n_e) \\
            &=2 n_on_e (n_o-n_e) \\
            &=2 n_o C_{n_e}^1 C_{n_o-n_e}^1.
        \end{aligned}
    \end{equation}

    For every group of $i,j,\alpha,\beta$, the double-excitation operator will result in 8 Pauli strings, each of which is $(i-\beta+1)$-local. And different choices of $j,\alpha$ will not affect the locality. Then the two-qubit gate depth caused by the double-excitation term is:
    \begin{equation}\label{ld}
        \sum_{i=n_e+1}^{n_e+(n_o-n_e)} \left(  \sum_{\beta=1}^{n_e} (i-\beta)(n_e-\beta)(i-n_e-1)  \right) \times 8 = \frac 83 (2n_o+1) C_{n_e}^2 C_{n_o-n_e}^2 .
    \end{equation}
    Adding Eq. \eqref{ls} and \eqref{ld}, we obtain the overall two-qubit layer depth. And the theorem is now finished.
\end{proof}

\begin{mythm}
    For the HEA described above with $P$ blocks, we have:
    \begin{align}
        l_{\operatorname{single}} &= 3P, \\
        l_{\operatorname{double}} &= nP.
    \end{align}
\end{mythm}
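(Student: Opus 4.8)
The plan is to analyze a single block $U_{\operatorname{entangle}} U_{\operatorname{single}}(\bm{\theta}_p)$ of the ansatz in Eq.~\eqref{heacirc} and then multiply the per-block depths by the number of blocks $P$, using the fact that the two gate types never share a layer: $U_{\operatorname{single}}$ contains only single-qubit rotations while $U_{\operatorname{entangle}}$ contains only CNOTs, so the single-qubit depth and the two-qubit depth accumulate independently across the $P$ repetitions. Establishing this clean separation up front is what lets me treat the two quantities $l_{\operatorname{single}}$ and $l_{\operatorname{double}}$ one at a time.

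For the single-qubit depth I would observe that $U_{\operatorname{single}}(\bm{\theta}_p) = \prod_{i=1}^n R_Z(\theta_p^{i1}) R_X(\theta_p^{i2}) R_Z(\theta_p^{i3})$ applies the sequence $R_Z R_X R_Z$ on each wire $i$. The three rotations on a fixed qubit act on the same wire and must run sequentially, contributing depth $3$; but the $n$ triples sit on disjoint wires and execute in parallel, so they do not add to the depth. Hence each block contributes exactly $3$ to the single-qubit layer depth, and $l_{\operatorname{single}} = 3P$ follows immediately.

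For the two-qubit depth I would count the CNOTs in $U_{\operatorname{entangle}} = \operatorname{CNOT}_{n,1}\prod_{i=1}^{n-1}\operatorname{CNOT}_{i,i+1}$: there are $n-1$ gates forming the nearest-neighbor chain plus the single wrap-around $\operatorname{CNOT}_{n,1}$, for $n$ CNOTs in total. Taking the application order shown in Fig.~\ref{heacir}, each CNOT shares a control or target qubit with its predecessor along the chain (and the wrap-around closes the ring), so consecutive gates cannot be merged into a single layer; applying them in this order places each CNOT in its own layer, giving two-qubit depth $n$ per block and hence $l_{\operatorname{double}} = nP$.

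The point needing the most care is the two-qubit depth, specifically the layer convention. Non-adjacent chain gates such as $\operatorname{CNOT}_{1,2}$ and $\operatorname{CNOT}_{3,4}$ act on disjoint qubits and could in principle fire simultaneously, which would compress the depth below $n$; the stated value $nP$ therefore corresponds to the sequential schedule in which the entangling gates are executed in the written order rather than an optimally parallelized one. I would make this convention explicit, namely that $l_{\operatorname{double}}$ counts the layers of the circuit exactly as laid out in Eq.~\eqref{heacirc} and Fig.~\ref{heacir}, so that $nP$ serves as the relevant estimate for $t_{\operatorname{gate}}$, and remark that any further parallelization would only shorten the running time and hence strengthen, rather than undermine, the subsequent time-cost conclusions.
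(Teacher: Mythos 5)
Your core argument is correct, and it is essentially the only argument available: the paper states this theorem without any proof, treating both depths as read directly off the circuit of Eq.~\eqref{heacirc} and Fig.~\ref{heacir}. Your counting — three sequential rotations per wire, parallel across wires, giving $3$ single-qubit layers per block, and $n$ CNOTs per block each occupying its own layer — reproduces exactly what the paper takes as evident, so on the main line there is nothing to object to.

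The one flaw is your final caveat, which is not merely over-cautious but factually wrong, and it needlessly weakens the statement you prove. You claim that non-adjacent gates such as $\operatorname{CNOT}_{1,2}$ and $\operatorname{CNOT}_{3,4}$ ``could in principle fire simultaneously,'' so that $l_{\operatorname{double}}=nP$ would only hold under a sequential scheduling convention. They cannot fire simultaneously in any valid schedule. The gate $\operatorname{CNOT}_{2,3}$ sits between them and commutes with neither: the target of $\operatorname{CNOT}_{1,2}$ is the control of $\operatorname{CNOT}_{2,3}$, and the target of $\operatorname{CNOT}_{2,3}$ is the control of $\operatorname{CNOT}_{3,4}$, and a CNOT pair sharing a qubit in a target--control pattern does not commute. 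Hence every legal layering must satisfy $\text{layer}(\operatorname{CNOT}_{1,2})<\text{layer}(\operatorname{CNOT}_{2,3})<\text{layer}(\operatorname{CNOT}_{3,4})$, and the same argument propagates down the whole staircase and through the wrap-around gate $\operatorname{CNOT}_{n,1}$, which commutes with neither $\operatorname{CNOT}_{n-1,n}$ nor $\operatorname{CNOT}_{1,2}$. The dependency chain therefore forces exactly $n$ two-qubit layers per block: $l_{\operatorname{double}}=nP$ is the minimum over all schedules preserving the unitary, not an artifact of executing the gates in written order. Your conclusion stands, but you should delete the convention-dependence remark and replace it with this commutation argument; as written, your proof concedes that the theorem might overstate the depth, when in fact the stated depth is intrinsic (short of re-synthesizing the entangler as a genuinely different circuit).
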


\paragraph{$\bm{N_{\operatorname{sample}}}$} Cost functions in VQAs are obtained via repeated sampling, where reaching the sampling error $\epsilon$ requires sampling the circuit $O(1/\epsilon^2)$ times. then $N_{\operatorname{sample}}$ is determined by the sampling accuracy required.

Generally, the sampling error should be within the accuracy required for solving the problem. However, to perform parameter optimization, sampling accuracy should also be related to the scaling of the gradient. Suppose we are applying the parameter-shift rule \cite{psr} to evaluate the gradient as:
\begin{equation}
    \partial_jC = \frac 12 \left(  C_+ -C_-   \right),
\end{equation}
with $C_{\pm} = C(\theta_j\pm \pi/2)$ and $\partial_jC = \partial C/\partial \theta_j$.

Denote the sampling error as $\epsilon$ and the sampled gradient as $\overline{\partial_jC}$. The worst case is (Suppose $\epsilon > 0$):
\begin{equation}
    \begin{aligned}
        \overline{\partial_jC} &= \frac 12 \left( [C_+ - \epsilon] - [C_-+\epsilon]   \right) \\
        &= \partial_jC - \epsilon.
    \end{aligned}
\end{equation}
To update parameters in the correct direction, we need:
\begin{equation}
    \partial_jC/\overline{\partial_jC} = \frac{\partial_jC}{\partial_jC-\epsilon} > 0.
\end{equation}
Then sampling accuracy is dependent on the scaling of the gradient.

While the magnitude of the gradient could be affected by the barren plateaus, exponential sampling times would be required, which is not workable in practice. We will analyze the time cost with a set of several given sampling times. In real tasks, we can apply methods to reduce the sampling times, address the barren plateaus phenomenon and reduce measurement costs.

\paragraph{$\bm{N_{\operatorname{iterate}}}$} Generally, $N_{\operatorname{iterate}}$ is not pre-known and differs between problems. Even for the same problem, different initial parameters and the choice of optimization algorithms will make $N_{\operatorname{iterate}}$ different. In gradient descent algorithms, both the learning rate and the gradient scaling will affect the iteration times. Moreover, while the scaling of the gradient can be affected by barren plateaus or local minimum points, optimization will take more steps. Therefore, we will treat $N_{\operatorname{iterate}}$ similar to $N_{\operatorname{sample}}$, where we will provide the time cost for a set of given $N_{\operatorname{iterate}}$. And we combine these two factors as:
\begin{equation}
    N_{\operatorname{si}} = N_{\operatorname{sample}} \times N_{\operatorname{iterate}},
\end{equation}

\paragraph{$\bm{t_{\operatorname{VQA}}}$}
Now we provide the value of $t_{\operatorname{VQA}}$ for both UCCSD ansatz and HEA. In general, 
\begin{equation}
    \begin{aligned}
        t_{\operatorname{VQA}} &=     t_{\operatorname{sample}} \times     N_{\operatorname{sample}} \times     N_{\operatorname{gradient}} \times     N_{\operatorname{iterate}} \\
        &= N_{\operatorname{si}} \times (        t_{\operatorname{single}} \times     l_{\operatorname{single}} +     t_{\operatorname{double}} \times     l_{\operatorname{double}} ) \times L \\
        &= 3\times 10^{-8} \times N_{\operatorname{si}} \times (l_{\operatorname{single}}+2l_{\operatorname{double}} )\times L.
    \end{aligned}
\end{equation}
Based on the former analysis, when considering the above ansatzes, we have:
\begin{equation}
    \begin{aligned}
        t_{\operatorname{VQA-UCCSD}} &=  10^{-8} \times N_{\operatorname{si}} \times \left(    C_{n_e}^1 C_{n_o-n_e}^1 + C_{n_e}^2 C_{n_o-n_e}^2  \right) \\
        &\times  \left[  (12n_o+18)     C_{n_e}^1 C_{n_o-n_e}^1  + (16n_o+88)C_{n_e}^2 C_{n_o-n_e}^2    \right],
    \end{aligned}
\end{equation}
and
\begin{equation}
    \begin{aligned}
        t_{\operatorname{VQA-HEA}} &= 9\times 10^{-8} \times N_{\operatorname{si}} \times (2n^2+3n)P^2 .
    \end{aligned}
\end{equation}

We can see that for a fixed $N_{\operatorname{si}}$, the total time establishes a polynomial growth.

\section{Total time cost}\label{costresult}

Based on the analysis in Sec. \ref{costvaq}, we now provide the detailed time cost for running VQAs. We will estimate the time cost under realistic assumptions of an ideal quantum processor. That is, we only take into account circuit running time and the sampling process for obtaining cost functions, and other factors including hardware noise, connectivity between physical qubits, the time for initializing the hardware and reading out the outcomes, as well as limitations for VQAs like reachability and trainability, are all ignored. The goal of ignoring these factors is to show the ``best'' time-scaling performance of VQAs.

As a representative application scenario, we consider applying VQAs to solve the ground states of different-sized molecular systems and label the systems according to their spin-orbital numbers $n_o$, which is also the number of qubits required: $n$. The number of electrons is set to be $n_e=n_o/2$.

Since $N_{\operatorname{sample}}$ and $N_{\operatorname{iterate}}$ are not pre-determined, we will provide the time cost concerning the value of the two factors, which are listed as:
\begin{align}
    N_{\operatorname{sample}}  &\in\{ 10^4,10^5,10^6,10^7,10^8 \},\\
    N_{\operatorname{iterate}} &\in\{ 10^2,10^3,10^4 \}.\label{iterate}
\end{align}
Combine them as one factor: Therefore, $N_{\operatorname{si}}$ ranges from $10^6$ to $10^{12}$.

Given $n_o$ and $n_e$, the structure of UCCSD ansatz is determined. However, the block depth $P$ needed is generally hard to be determined. Therefore, we will consider the following two cases: $P=n$ and $P=n^2$.

In Fig. \ref{tceucc} and \ref{tcehea}, we plot the time cost with different values of $N_{\operatorname{si}}$ for both UCCSD ansatz and HEA. The 1-year and 1000-year time are given as benchmarks.

\begin{figure}[htbp]
    \centering
    \includegraphics[width=0.6\linewidth]{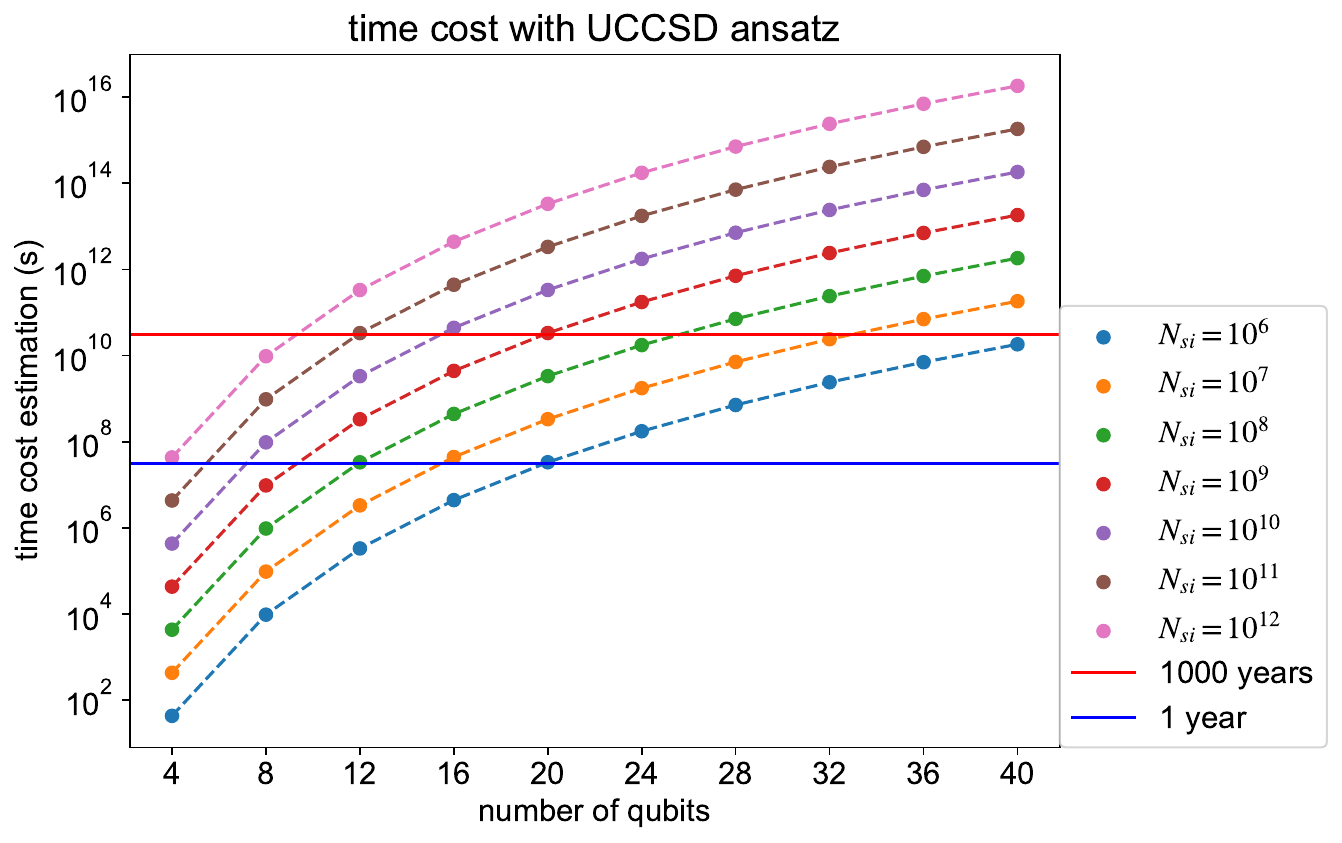}
    \caption{Time cost estimation for running a VQA with the UCCSD ansatz.}
    \label{tceucc}
\end{figure}

\begin{figure}[htbp]
    \centering
    \includegraphics[width=1\linewidth]{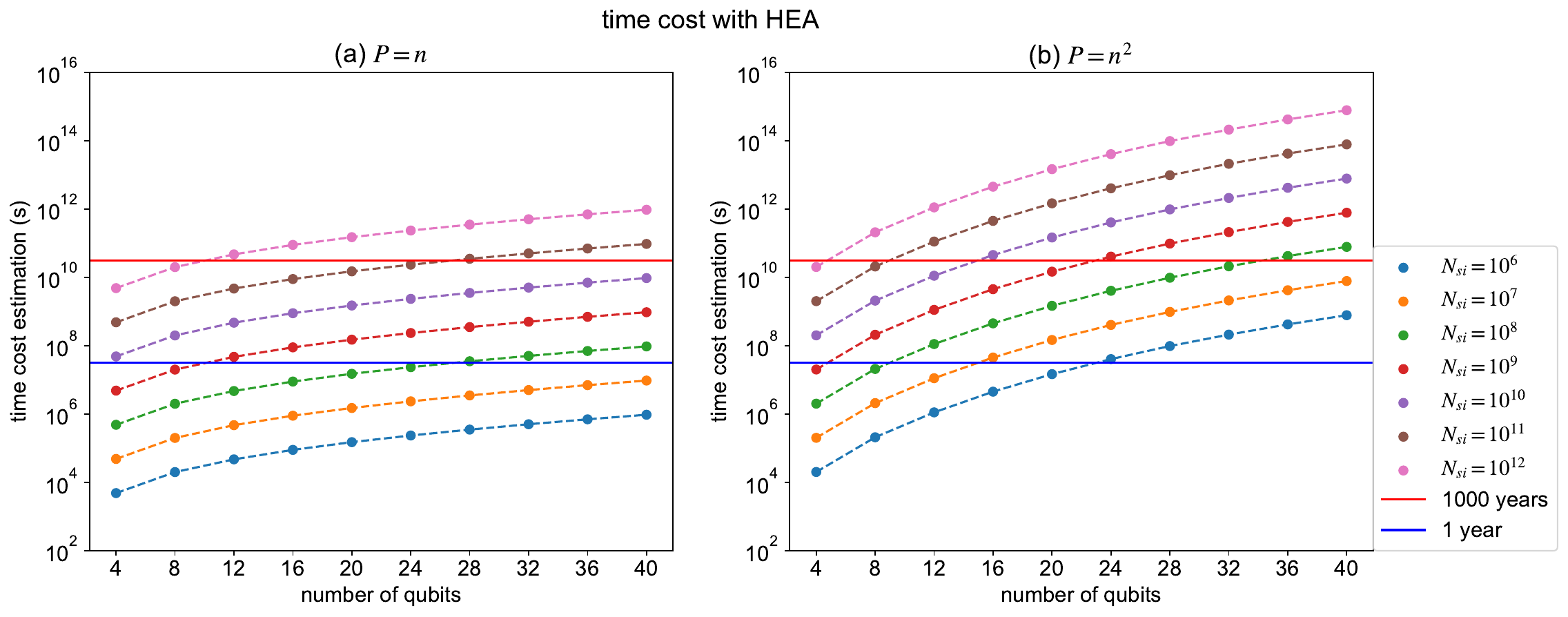}
    \caption{Time cost estimation for running a VQA with the HEA ansatz. (a) and (b) represent $P=n$ and $P=n^2$, respectively. $P$ is the number of blocks in HEA.}
    \label{tcehea}
\end{figure}

From the figures, it is clear that for a fixed value of $N_{\operatorname{si}}$, the total time cost for running VQAs establishes a polynomial growth with the number of qubits. Compared to the exponential scaling with classical simulation, VQAs seem to perform better.

However, in terms of real-time scaling, it is not the case. Even at a scaling of about 20 qubits, VQAs easily reached the 1-year time. In quantum chemistry tasks, to achieve chemical accuracy, sampling times is at least $10^6$ times. Then the total time cost corresponding to $N_{\operatorname{si}}=10^6$ can be viewed as the time for performing one step of parameter optimization, which comes at the level of 1 year. Since this is already the time on an ideal quantum computer, the real-time cost will be larger than this result. 

\section{VQAs v.s. classical simulations}\label{qvsc}

Since the term ``quantum advantage'' is a topic compared to classical simulations, it is insufficient to only provide the time cost for using VQAs. In this part, we also consider the time cost of simulating VQAs using classical simulation of quantum circuits.

As quantum processors are unavailable for common research, classical simulation of quantum circuits is widely applied. The major difference between quantum simulation and classical simulation of quantum circuits is the time of quantum gates does not change with the number of qubits, but it is not the case with classical simulation. A quantum operation $U_{\bm{x}}$ with $\bm{x}$ the list of qubits that the operation acts on, is indeed $U_{\bm{x}}\otimes I_{\bar{\bm{x}}}$, where $\bar{\bm{x}}=\{ k|k\notin \bm{x} \}$. In this case, the time of applying a quantum gate grows exponentially with the number of qubits. 

We set the gate time of 10 qubits as $t_{10}=10^{-3}$ s and the time for $n$ qubits is $t_n=t_{10}2^{n-10}$. Sampling is not required with classical simulation. We set $N_{\operatorname{sample}}=10^6$ for quantum simulations to reach the chemical accuracy. And $N_{\operatorname{iterate}}$ is listed in Eq. \eqref{iterate}.

\begin{figure}[ht]
    \centering
    \subfigure[UCCSD ansatz v.s. classical simulation]{
        \includegraphics[width=0.6\linewidth]{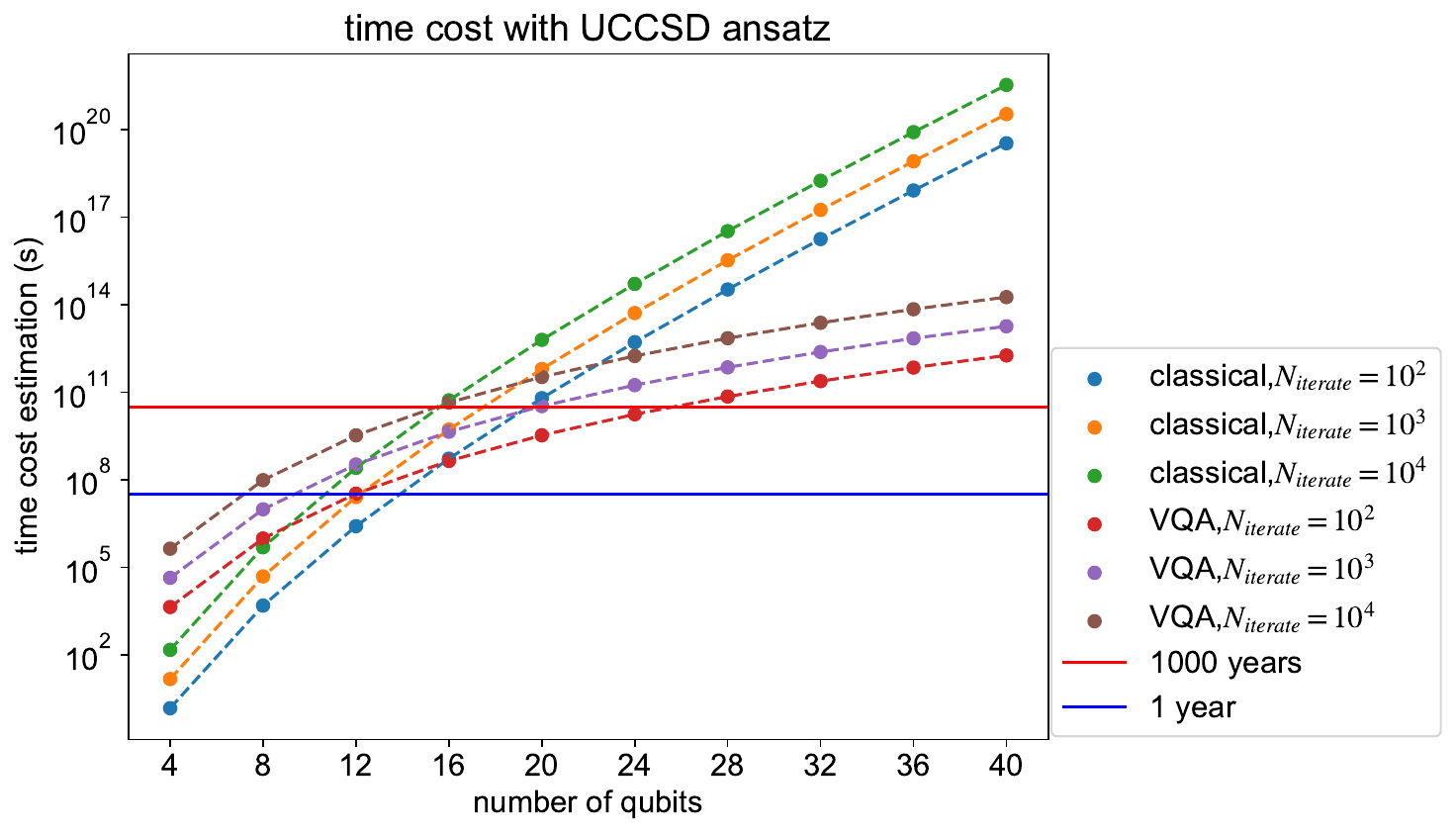}
    }
    \subfigure[HEA v.s. classical simulation]{
        \includegraphics[width=0.6\linewidth]{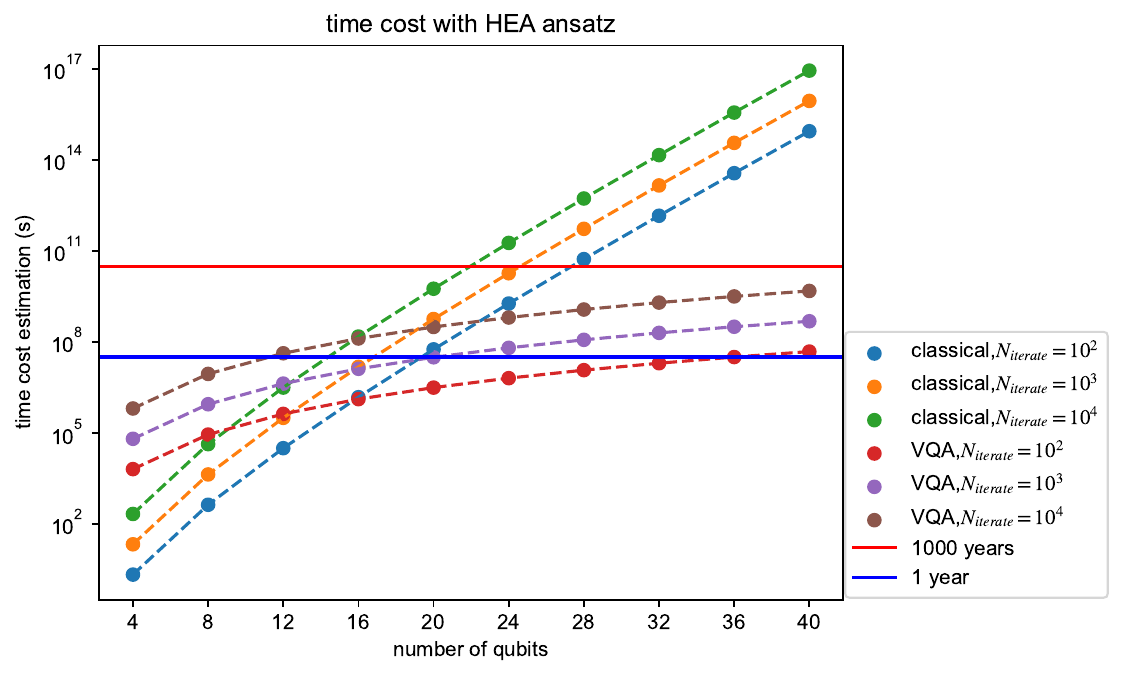}
    }
    \caption{Time cost comparison  between running VQAs and classical simulations}
    \label{quantumclassical}
\end{figure}

The time comparison between VQAs and classical simulations with both UCCSD ansatz and HEA is shown in Fig. \ref{quantumclassical}. Due to the different increasing speeds, the time curve of VQAs and classical simulations crossed, whose corresponding time is denoted as $T$, which is a function of the ansatz, iteration number, etc.
It is only possible for VQAs to outperform classical computers when the time required is larger than $T$. From the figures, this time is at the scaling of years, and it also increased with the number of parameters. 

Moreover, different from quantum processors, classical simulations can apply multi-cores, which can also provide a time reduction. For instance, in \cite{csqc}, the average gate time is 2.09 s and 1.22 s when performing a 29-qubit and 40-qubit quantum operation. While quantum simulation with multiple quantum processors is still unavailable nowadays. Therefore, quantum advantages are difficult to reach for VQAs in the acceptable time-scaling.

\section{Conclusion and outlook}\label{conclusion}

In this paper, we have investigated the time-scaling performance of VQAs and the potential for VQAs to achieve quantum advantages. We proved that methods like backpropagation cannot be directly applied when training QNNs since the inter-layer quantum states of QNNs are not recorded. And this makes the gradient-evaluation cost depend on the number of parameters in the quantum version of NN models, which limits the scalability of VQAs. Based on this result, we estimated the time cost of running VQAs in ideal cases, where realistic limitations like noise, reachability, and qubit connectivity are not considered, and we only take into account the time of performing quantum gates and errors due to finite sampling times. The result showed that even though the time established a polynomial growth, the time scaling easily reached the 1-year time wall time. Finally, we considered the time of applying classical simulations, which grows exponentially with the number of qubits. The result showed that the running time of VQAs is only shorter when the time-scaling is over $10^2$ years with the UCCSD ansatz. However, due to the realistic limitations mentioned above, whether VQAs can perform better is still not sure. At a regular time-scaling, quantum advantages may be unavailable with VQAs.

By providing such a negative comment, we do not want to deny the potential of VQAs and the NISQ algorithms. In view of VQAs, optimizations need to be made to reduce the time cost, examples like more efficient sampling strategies and more parameter-saving ansatzes. And one of our future works is to design backpropagation-type algorithms for efficiently training QNNs.

In the view of long term, introducing quantum computing into the context of machine learning, or equivalently, quantum machine learning, has remarkable potential. However, due to the different features between quantum and classical computation, directly replacing the NN model with QNN may not be the optimal way to achieve quantum advantages. Seeking a more natural way to carry out QML tasks would be meaningful.

Taking one step further, a variety of quantum algorithms is a quantum-classical hybrid: A question is solved by classical pre-processing, quantum computation, and classical post-processing. Usual algorithms replace one step of classical computation with quantum computation, but the pre-processing process to fit quantum computation is preferred.

\section*{Acknowledgement}
This work was supported by the National Natural Science Foundation of China (Grant No. 12034018), and Innovation Program for Quantum Science and Technology No. 2021ZD0302300.

\section*{Data availability}
All the data that support the findings of this study are available within this article.

\bibliography{ref.bib}
\bibliographystyle{quantum}
\end{document}